 \theoremstyle{plain}
 \newtheorem{theorem}{Theorem}
 \newtheorem{corollary}[theorem]{Corollary}
 \theoremstyle{definition}
 \newtheorem{definition}[theorem]{Definition}
 \theoremstyle{remark}
 \theoremstyle{plain}
 \newtheorem*{theorem*}{Theorem}
 \newtheorem*{lemma*}{Lemma}
 \newtheorem*{corollary*}{Corollary}
 \newtheorem*{proposition*}{Proposition}
 \newtheorem*{claim*}{Claim}
\newlength{\actualtopmargin}
\newlength{\actualsidemargin}
\newcommand{\ket}[1]{| #1 \rangle}
\newcommand{\abs}[1]{\vert #1 \vert}
\newcommand{\ceil}[1]{\lceil #1 \rceil}
\newcommand{\bound}[2]{\in\{ #1,\ldots,#2\}}
\newcommand{\Vertice}{V}
\newcommand{\Edge}{E}
\newcommand{\myvec}[1]{\vec #1}
\newcommand{\calH}{\mathcal{H}}
\newcommand{\sfA}{\mathsf{A}}
\newcommand{\sfB}{\mathsf{B}}
\newcommand{\sfQ}{\mathsf{Q}}
\newcommand{\sfR}{\mathsf{R}}
\newcommand{\sfS}{\mathsf{S}}
\newcommand{\sfT}{\mathsf{T}}
\newcommand{\bbC}{\mathbb{C}}
\newcommand{\bbF}{\mathbb{F}}
\newcommand{\Complex}{\bbC}
\newcommand{\bmzero}{\boldsymbol{0}}
\newcommand{\bmone}{\boldsymbol{1}}
\newcommand{\unitaryU}{\mathit{U}}\newcommand{\unitaryZ}{\mathit{Z}}
\newcommand{\unitaryW}{\mathit{W}}
\newcommand{\unitaryY}{\mathit{Y}}
\newcommand{\ignore}[1]{}
\begin{document}



\ignore{
\title{\Large
 \textbf{Constructing Quantum Network Coding Schemes from Classical Nonlinear Protocols
 }\\
}

\author{
 Hirotada Kobayashi\footnotemark[1]\\
 \and
 Fran\c{c}ois Le Gall\footnotemark[2]\\
 \and
 Harumichi Nishimura\footnotemark[3]\\
 \and
 Martin R\"otteler\footnotemark[4]\\
}

\date{}
}



\renewcommand{\thefootnote}{\fnsymbol{footnote}}

\ignore{
\begin{center}
{\large
 \footnotemark[1]%
 National Institute of Informatics, Tokyo, Japan\\
 [1.5mm]
 \footnotemark[2]%
 The University of Tokyo, Tokyo, Japan\\
 [1.5mm]
 \footnotemark[3]%
 Osaka Prefecture University, Sakai, Osaka, Japan\\
 [1.5mm]
 \footnotemark[4]%
 NEC Laboratories America, Inc., Princeton, NJ, USA
}\\
[5mm]
\end{center}
}

\pagestyle{plain}
\begin{center}
{\Large {\bf Constructing Quantum Network Coding Schemes from\\

\medskip

Classical Nonlinear Protocols}}
\\

\

{\sc Hirotada Kobayashi}$^1$ \hspace{5mm}
{\sc Francois Le Gall}$^2$ \hspace{5mm} 
{\sc Harumichi Nishimura}$^3$ \hspace{5mm} 
{\sc Martin R\"otteler}$^4$ 

\

{$^1${National Institute of Informatics, Tokyo, Japan}; {\tt hirotada@nii.ac.jp}

$^2${The University of Tokyo, Tokyo, Japan}; {\tt legall@is.s.u-tokyo.ac.jp}

$^3${Osaka Prefecture University, Japan}; {\tt hnishimura@mi.s.osakafu-u.ac.jp}

$^4${NEC Laboratories America, Inc., Princeton, NJ, USA}; {\tt mroetteler@nec-labs.com} 
}
\end{center}

\

{\bf Abstract.}  The $k$-pair problem in network coding theory asks to send $k$
  messages simultaneously between $k$ source-target pairs over a
  directed acyclic graph.  In a previous paper [ICALP 2009, Part I, pages 622--633] 
  the present authors
  showed that if a classical $k$-pair problem is solvable by means of
  a linear coding scheme, then the quantum $k$-pair problem over the
  same graph is also solvable, provided that classical communication
  can be sent for free between any pair of nodes of the graph. Here we
  address the main case that remained open in our previous work, namely
  whether {\em nonlinear} classical network coding schemes can also give
  rise to quantum network coding schemes. This question is motivated
  by the fact that there are networks for which there are no linear
  solutions to the $k$-pair problem, whereas nonlinear solutions
  exist. In the present paper we overcome the limitation to linear
  protocols and describe a new communication protocol for perfect
  quantum network coding that improves over the previous one as follows:
  (i) the new protocol does not put any condition on the underlying
  classical coding scheme, that is, it can simulate nonlinear
  communication protocols as well, and (ii) the amount of classical
  communication sent in the protocol is significantly reduced.

\

\section{Introduction}   

The idea of {\em network coding}, proposed in the seminal paper by Ahlswede, Cai, Li and Yeung \cite{ACLY:2000}, 
opened up a new communication-efficient way of sending information through networks. 
The key idea is to allow coding and replication of information locally at any intermediate node 
of the network. For instance, this allows one to send two bits simultaneously 
between two source-target pairs over several networks for which the same task cannot be solved
by routing. A simple, yet instructive, example is the butterfly network described in 
Fig.~\ref{fig:classicalbutterfly}. 
We refer to Refs.~\cite{FS:2007,HL:2008,Y:2007,YLCZ:2006} for extensive treatments 
of the topic of classical network coding.

In quantum information processing, it is often desirable to manipulate quantum states 
by applying local operations only, rather than applying global operations 
that require to send quantum information between different places. 
This in particular applies to the situation of communication tasks involving 
quantum information where it is quite natural to assume that whenever quantum information 
is sent over a channel, there is a high chance that it will be corrupted,
whereas classical information can be sent very reliably.
In this context a natural question is whether the concept of network coding can be
applied to quantum networks in order to reduce the amount of
{\em quantum communication}. There have been several studies working on 
``quantum'' network coding \cite{H:2007,HIN+:2007,LOW:2006,SS:2006}. 
These papers deal with the challenge to send quantum information over
a network as well as possible, a task that is greatly hampered 
by the fact due to the no-cloning theorem that unknown quantum information 
cannot be replicated. A natural target problem that has crystallized out from the
prior works \cite{H:2007,HIN+:2007,LOW:2006} as being at the
core of the issue is the following quantum $k$-pair problem: Given a
directed acyclic graph $G$ with $k$ source-target pairs (where we
assume all the edges, which represent quantum channels, have unit
capacities), is there a way of sending $k$ quantum messages between the
$k$ pairs? Note that the classical $k$-pair problem, in which the
channels and messages are classical, is one of the most important
network coding problems (for instance, see Refs.~\cite{DZ:2006,HKL:2006,INPRY:2008,LL:2004,WS:2007}).  
The butterfly network described in Fig.~1 is an instance of the two-pair problem.


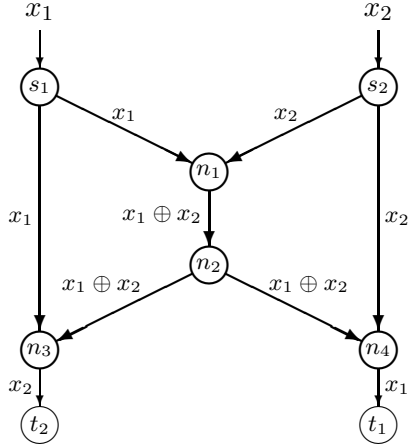
\begin{figure}[t]
\begin{center}
\setlength{\unitlength}{0.25mm}
\begin{picture}(220,230)
\thicklines
\put( 20, 220){\makebox(0,0){$x_1$}}
\put(200, 220){\makebox(0,0){$x_2$}}
\thinlines
\put( 20, 210){\vector(0,-1){20}}
\put(200, 210){\vector(0,-1){20}}
\thicklines
\put( 20, 180){\circle{20}}
\put( 20, 180){\makebox(0,0){\footnotesize $s_1$}}
\put( 20, 170){\vector(0,-1){120}}
\put( 10, 110){\makebox(0,0){\footnotesize $x_1$}}
\put( 28.94, 175.53){\vector(2,-1){72.12}}
\put( 65, 165){\makebox(0,0){\footnotesize $x_1$}}
\put(200, 180){\circle{20}}
\put(200, 180){\makebox(0,0){\footnotesize $s_2$}}
\put(200, 170){\vector(0,-1){120}}
\put(210, 110){\makebox(0,0){\footnotesize $x_2$}}
\put(191.06, 175.53){\vector(-2,-1){72.12}}
\put(151, 165){\makebox(0,0){\footnotesize $x_2$}}
\put(110, 135){\circle{20}}
\put(110, 135){\makebox(0,0){\footnotesize $n_1$}}
\put(110, 125){\vector(0,-1){30}}
\put( 85, 112){\makebox(0,0){\footnotesize ${x_1 \oplus x_2}$}}
\put(110,  85){\circle{20}}
\put(110,  85){\makebox(0,0){\footnotesize $n_2$}}
\put(101.06,  80.53){\vector(-2,-1){72.12}}
\put( 53,  75){\makebox(0,0){\footnotesize ${x_1 \oplus x_2}$}}
\put(118.94,  80.53){\vector(2,-1){72.12}}
\put(163,  75){\makebox(0,0){\footnotesize ${x_1 \oplus x_2}$}}
\put( 20,  40){\circle{20}}
\put( 20,  40){\makebox(0,0){\footnotesize $n_3$}}
\put(200,  40){\circle{20}}
\put(200,  40){\makebox(0,0){\footnotesize $n_4$}}
\thinlines
\put( 200, 00){\makebox(0,0){\footnotesize $t_1$}}
\put( 20, 00){\circle{20}}
\put( 20, 00){\makebox(0,0){\footnotesize $t_2$}}
\put( 200, 00){\circle{20}}
\put( 20,  30){\vector(-0,-1){20}}
\put(  10,  20){\makebox(0,0){\footnotesize $x_2$}}
\put(200,  30){\vector( 0,-1){20}}
\put(210,  20){\makebox(0,0){\footnotesize $x_1$}}
\end{picture}
\caption{
  The butterfly network and a classical linear coding protocol.
  The node $s_1$ (resp.~$s_2$) has for input a bit $x_1$ (resp.~$x_2$).
  The task is to send $x_1$ to $t_1$ and $x_2$ to $t_2$. 
  The capacity of each edge is assumed to be one bit. 
   \label{fig:classicalbutterfly}
}
\end{center}
\end{figure}

Unfortunately, in the early stage of studying quantum network coding
it was shown that there exist networks for which the classical $k$-pair problem is solvable
but the quantum $k$-pair problem is not perfectly solvable \cite{H:2007,HIN+:2007,LOW:2006}. 
For instance, two quantum states cannot be sent simultaneously and perfectly (i.e., with fidelity one) 
between the two source-target pairs in the butterfly network. 
However, the situation changes dramatically if classical communication is allowed freely 
(which seems to be reasonable since classical communication is much cheaper than quantum communication and does not increase the amount of entanglement of the system). 
Indeed, the authors of the present paper established that any linear
classical network coding protocol over ${\mathbb F}_2$ (i.\,e., a
scheme where the encoding operation at each node is a linear function
of its inputs) for the multi-cast problem can be turned into a perfect
quantum network coding protocol \cite{KLN+:2010}. This was generalized
to the $k$-pair case~\cite{KLNR:2009b} where it was shown that if the
classical $k$-pair problem is solvable using a {\em linear} coding
scheme (or even just a vector-linear coding scheme over a finite field
or a finite ring) then the quantum $k$-pair problem is also solvable
using free classical communication.  This result gives rise to two
natural questions.

The first question is whether the linearity condition on the coding schemes of the classical $k$-pair problem 
can be removed. Indeed, there exist classical $k$-pair problems 
that are solvable with nonlinear coding schemes, 
but cannot be solved with linear coding schemes~\cite{DFZ:2005,R:2004}.
This question is closely related to the following open problem: 
can we construct an instance of the $k$-pair problem for which there is a 
(nonlinear) solution to the classical version of the problem, but for which no perfect solution to the 
corresponding quantum version exists, even with free classical communication?
Note that the techniques used in Ref.~\cite{KLNR:2009b} rely on the linearity 
of the classical encoding scheme, and hence they cannot be used directly 
when simulating classical nonlinear coding schemes.

The second question is how much amount of classical communication is sufficient. 
The protocol in Ref.~\cite{KLNR:2009b} essentially uses the fact 
that classical information can be sent (for free) between any two nodes, i.e., 
there exists a classical two-way channel between any two nodes, and their capacities are unlimited. 
Obviously, it would be desirable to find a weaker requirement on the classical communication, 
and to reduce the amount of classical communication as much as possible.
This second question is closely related to the work 
of Leung, Oppenheim and Winter \cite{LOW:2006}. 
They investigated various settings of quantum network coding assisted with 
supplied resources such as free classical communication or entanglement. 
Among others, they considered the case where classical communication can be sent 
only between each pair of nodes connected by a quantum channel and only in the direction of that quantum channel.
 Unfortunately again, they found that the quantum two-pair problem on the butterfly network cannot be solved 
even under this model. One open problem is thus to clarify which types of assistance of 
classical communication enable us to construct a quantum network coding protocol 
for a given $k$-pair problem, and show the minimal amount of classical communication 
necessary under such a model. 

{\bf Our contribution.} This paper provides solutions to both of the above two questions.
 We present a quantum protocol solving, if there is some help of 
classical communication, any instance of the $k$-pair problem for which 
 the corresponding classical version is solvable (under {\em any} coding scheme). 
 In other words, our result shows that whenever an instance of the classical $k$-pair problem is
 solvable, the quantum version of the same problem is solvable when assisted with 
classical communication. Furthermore, classical communication is only sent between two nodes linked 
by quantum channels, and more precisely one unit of classical communication
is sent in the direction of each quantum channel, and one unit is sent in the reverse direction of each 
quantum channel. 
When considering two-dimensional quantum states (qubits), 
each classical communication unit consists of one bit, and thus, at most two bits are sent between adjacent 
nodes: one in the direction of the quantum channel and the other in the reverse direction. 
The total amount of classical communication bits sent is then at most twice the number of edges in the graph.
This significantly improves the bound given in Ref.~\cite{KLNR:2009b}, in which 
the amount of classical communication going through every edge could depend on the number of nodes. 

The starting point of our protocol is the method proposed in Ref.~\cite{KLNR:2009b}.
We first simulate a classical protocol 
by applying a quantum operator at each node in order to simulate the classical encoding performed at this node. 
This simulation introduces intermediate registers that are entangled with the quantum state 
we want to send to the targets, and thus have to be ``properly disentangled.'' 
All the difficulties come from this latter crucial part. The technique used in  
Ref.~\cite{KLNR:2009b} was to measure these intermediate registers in the Fourier basis, 
and then to send the measurement outcomes to the target nodes, who then correct locally 
the phase introduced by the measurements. 
However, this technique relies on the fact that the classical protocol 
being simulated is linear, namely that the exponent in the phase introduced 
is a linear function of the input --- this is why the phase could be corrected locally at the targets. 
In our new protocol, we consider a different way of successfully disentangling 
the intermediate registers. 
The registers are again measured in the Fourier basis, but the measurement outcomes are then sent
to the nodes to which the current node has incoming edges (instead of to the target nodes). 
We then show that, when these operations are done in a proper order (a reverse topological order of the nodes), 
then the phase introduced by the measurements can be corrected locally at these nodes. 
Repeating this process for each internal node of the graph enables us to disentangle 
almost all the intermediate registers. 
The remaining intermediate registers are those owned by the $k$ source nodes, 
which can be disentangled by measuring them in the Fourier basis, 
but now sending the measurement outcomes through the graph to the targets. 
The point is that this can be done in a very communication-efficient way 
since this becomes precisely an instance of the classical $k$-pair problem for which a solution is available. 

In our new protocol, the classical coding scheme we simulate then appears three times. First, this scheme is simulated quantumly, which introduces the intermediate registers --- this uses one unit of quantum communication for each edge (in the original direction of the edge). Second, it is used when removing 
the internal intermediate registers to correct the phase --- this uses one unit of classical communication for each edge (in the reverse direction of the edge). Third, it is used explicitly in order to remove, at the last part of the protocol, the intermediate registers owned by the source nodes --- this uses one unit of classical communication 
for each edge (in the original direction of the edge).

Actually, our techniques can also be used to create EPR-pairs between the sources and the targets of an 
instance of the $k$-pair problem, whenever the associated classical $k$-pair problem is solvable, using one qubit of quantum communication and only one bit of classical communication per edge,
as will be discussed in Section \ref{section:EPR}.
Note that once EPR-pairs are shared, the quantum $k$-pair problem can be solved using teleportation.
However, this would require three bits per edge in total, while the protocol described above (designed specifically for the $k$-pair problem) uses only two bits per edge.

{\bf Related work.}
The restriction of classical communication in this paper was also
studied before by Leung, Oppenheim and Winter \cite{LOW:2006} (who called this 
the two-way assisted case with free two-way classical communication). 
As mentioned in Ref.~\cite{LOW:2006}, this enables
us to send a qubit in a reverse direction to the quantum channel
by sending two bits (via quantum teleportation). Hence some examples
of the $k$-pair problem are shown to be solvable by routing (i.e., without using any coding) when 
{\em time sharing} is allowed
(i.e., when the rate of transmission is studied under the assumption that the network can be used more than once):
see Ref.~\cite{LOW:2006} for the butterfly network. 
It is open, however, whether every instance is solvable by routing. 
Our method suggests a different way of solving the quantum $k$-pair problem, 
which requires some coding but does not require time sharing, and works for 
any solvable classical instance.

\section{\boldmath{The $k$-pair problem}}\label{section:statement}
\subsubsection*{\bf The classical \boldmath{$k$}-pair problem.}
We recall the statement of the $k$-pair problem in the classical case 
(often called the multiple unicast problem), and the definition of a solution to this problem.
The reader is referred to, for example, Ref.~\cite{DZ:2006} for further details. 
We use the same setting as in Ref.~\cite{KLNR:2009b}, 
but some conventions differ in order to facilitate the exposition of our new protocol.

An instance of a $k$-pair problem is a directed acyclic graph ${G=(\Vertice,\Edge)}$ and $k$ pairs $(s_1,t_1),\ldots,(s_k,t_k)$
of nodes in $V$. 
We suppose, without loss of generality, that the nodes $s_1,\ldots, s_k$ have fan-in zero, and the nodes $t_1,\ldots, t_k$ have fan-in one and fan-out zero.
We denote by $\overline{\Edge}$ the set
of {\em internal edges}, i.e., the edges in $\Edge$ with no extremity in $\{t_1,\ldots,t_k\}$. The $k$ edges with an 
extremity in $\{t_1,\ldots,t_k\}$ are called the {\em output edges}. For each node~$v$ in~$\Vertice$, we fix an arbitrary ordering 
of the incoming edges of $v$, and an arbitrary ordering of the outgoing edges of $v$.

For each $i\bound{1}{k}$, a message $x_i$ is given at the source~$s_i$, and has to  
be sent to the target~$t_i$ through $G$ under the condition 
that each edge has unit capacity. For convenience, the following convention is assumed when describing a classical coding scheme. 
Each source ${s_i}$ is supposed to have a ``virtual'' incoming edge from which it receives its input $x_i$ (each source node has 
thus fan-in at least one, but these virtual edges are not included in $\Edge$).
In this way, the source nodes perform coding operations on their inputs,
and this convention enables one to ignore the distinction between
source nodes and internal nodes.
These conventions are illustrated in Fig.~\ref{fig:classicalbutterfly}.

Let $\Sigma$ be a finite set.
A {\em coding scheme} over $\Sigma$ is a choice of operations for each node in $V$ 
with nonzero fan-in and nonzero fan-out: 
for each node $v\in\Vertice$ with fan-in~$m\ge 1$ and fan-out~$n\ge 1$, the operation at $v$ 
is written as $n$ functions $f_{v,1},\ldots,f_{v,n}$, each from $\Sigma^m$ 
to $\Sigma$, where the value $f_{v,i}(z_1,\ldots,z_m)$ represents the message 
sent through the $i$-th outgoing edge of $v$ when the inputs of the $m$ incoming edges are
$z_1,\ldots,z_m$.  Since the graph $G$ is acyclic,  we can fix a topological ordering of the nodes of the graphs,
i.e., an ordering in which each node comes before all nodes to which it has outgoing edges. The coding scheme can then be explicitly implemented by 
applying the encoding functions in the order in which the nodes appear.
A {\em solution} over $\Sigma$ to an instance of the $k$-pair problem is a coding scheme over $\Sigma$ 
that enables one to send simultaneously $k$ messages $x_i\in\Sigma$ from $s_i$ to $t_i$, for all~$i\bound{1}{k}$.
For example, the coding scheme in Fig.~\ref{fig:classicalbutterfly} is a solution over 
$\{0,1\}$
to the two-pair problem associated with the butterfly graph.

\subsubsection*{\bf The quantum \boldmath{$k$}-pair problem.}
We suppose that the reader is familiar with the basics of quantum information theory and 
refer to Ref.~\cite{NC:2000} for a good reference.
In this paper we use the same model for the quantum $k$-pair problem as in Ref.~\cite{KLNR:2009b} 
except restricting the classical communication to be allowed.

An instance of a quantum $k$-pair problem is, as in the classical case, a directed acyclic graph $G=(\Vertice,\Edge)$ and $k$ pairs of nodes $(s_1,t_1),\ldots,(s_k,t_k)$. 
Let $\calH$ be a Hilbert space. A (quantum) solution for $\calH$ to the instance  
is a quantum coding scheme, i.e., a choice of quantum operations over all nodes, 
allowing us to send a quantum state $\ket{\psi}\in\calH^{\otimes k}$ supported on the source nodes $s_1,\ldots,s_k$ 
(in this order) to the target nodes $t_1,\ldots,t_k$ (in this order). 
We consider the model where each edge of  $G$ can transmit one quantum state over~$\calH$. 
In this paper, classical communication is only allowed 
between any two adjacent nodes: if $(v_1,v_2)\in E$ then classical communication is possible
from $v_1$ to $v_2$ and from $v_2$ to $v_1$.
(Note that in Ref.~\cite{KLNR:2009b} classical communication was allowed between any two nodes of $G$.)
For a positive integer~$d$, an instance of the quantum $k$-pair problem is said to be \emph{solvable} 
over $\Complex^d$ if there exists a protocol solving this problem for~$\calH=\Complex^d$.

\section{Main result}\label{section:main}

The main result of this paper is the following theorem. 
\begin{theorem}\label{theorem1}
Let $G=(\Vertice,\Edge)$ be a directed acyclic graph and $(s_1,t_1),\ldots,(s_k,t_k)$ be $k$ pairs of nodes in $V$.
Let $\Sigma$ be a finite set.
Suppose that there exists a solution over $\Sigma$ to the associated classical $k$-pair problem.
Then the corresponding quantum $k$-pair problem is solvable over $\Complex^{|\Sigma|}$. 
Moreover, there exists a quantum protocol for this task that sends at most  two elements of $\Sigma$ per edge as classical communication (one in each direction of the edge), 
i.e., at most $2\abs{\Edge}\ceil{\log_2 \abs{\Sigma}}$ bits of classical communication in total.
\end{theorem}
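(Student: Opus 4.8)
The plan is to make precise the three-stage strategy outlined in the introduction. Throughout, identify $\Sigma$ with $\Int_d$ for $d=\abs{\Sigma}$, write $\omega=e^{2\pi\sqrt{-1}/d}$, and let $F$ denote the quantum Fourier transform on $\Complex^d$, so that $F\ket a=\frac1{\sqrt d}\sum_{b\in\Int_d}\omega^{ab}\ket b$. Assign to every edge $e\in\Edge$ a register $R_e\cong\Complex^d$, and, following the virtual-edge convention, let each source $s_i$ additionally hold a register $R_i$ that initially carries its part of the input state; $R_i$ plays the role of the single incoming register of $s_i$. From the assumed classical solution one reads off, for every edge $e$, the value $g_e(x)$ carried on $e$ as a function of the global input $x=(x_1,\dots,x_k)$: for a virtual edge at $s_i$ it equals $x_i$, and if $e$ is the $j$-th outgoing edge of a node $u$ with incoming edges $e_1,\dots,e_m$ then $g_e=f_{u,j}(g_{e_1},\dots,g_{e_m})$. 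Since $t_i$ has fan-in one, the solution property is precisely that $x_i$ can be recovered from $g_{e_i}(x)$, where $e_i$ is the unique edge into $t_i$, by a fixed bijection $\pi_i$ of $\Int_d$. The task is to carry $\ket\psi=\sum_x\alpha_x\ket{x_1}\cdots\ket{x_k}$ from $s_1,\dots,s_k$ to the same vector on $t_1,\dots,t_k$.

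First I would perform a \emph{forward quantum simulation} of the classical scheme. Going through the nodes in a fixed topological order, at a node $u$ whose incoming registers currently hold the tuple $z=(z_1,\dots,z_m)$ I adjoin $n$ fresh registers in state $\ket0$, apply the permutation unitary $\ket z\ket w\mapsto\ket z\ket{w_1+f_{u,1}(z),\dots,w_n+f_{u,n}(z)}$ with sums in $\Int_d$, and send the $j$-th fresh register along the $j$-th outgoing edge of $u$ (for a source, $R_i$ is the single incoming register). The topological order guarantees that the registers feeding $u$ have already been delivered, and this is the only use of the quantum channels, one qudit per edge. The global state is now $\sum_x\alpha_x\bigl(\bigotimes_i\ket{x_i}_{R_i}\bigr)\bigl(\bigotimes_{e\in\Edge}\ket{g_e(x)}_{R_e}\bigr)$, and the registers still to be disentangled are exactly the internal-edge registers $\{R_e:e\in\overline{\Edge}\}$ together with the source registers $R_1,\dots,R_k$; the output registers $R_{e_1},\dots,R_{e_k}$, which by the fan-in-one assumption sit at the targets, are the ones that must survive.

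The heart of the argument is the \emph{second stage}, which removes the internal-edge registers and, unlike \cite{KLNR:2009b}, nowhere uses linearity. I would process the nodes in a \emph{reverse} topological order, doing nothing at the targets. At a non-target node $v$, for each incoming edge $e=(u,v)$ (automatically internal, since the tail of an edge is never a target), node $v$ applies $F$ to $R_e$ and measures it in the computational basis, obtaining a uniformly random outcome $c_e\in\Int_d$ and leaving the factor $\omega^{c_eg_e(x)}$ on each branch. Writing $e$ as the $j$-th outgoing edge of $u$, so that $g_e(x)=f_{u,j}(z(x))$ with $z(x)$ the tuple held in $u$'s incoming registers, node $v$ sends $c_e$ to $u$ along the reverse classical channel of $e$, and $u$ applies the diagonal unitary $\ket z\mapsto\omega^{-c_ef_{u,j}(z)}\ket z$ to those registers. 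The reverse topological order is exactly what makes this valid: $u$ is processed strictly later, so its incoming registers still literally carry $z(x)$ (earlier phase corrections applied to them left their contents unchanged), and so the correction cancels the inserted factor. This is the step where \cite{KLNR:2009b} required the classical encoding to be linear; here the unwanted phase is undone by reapplying the node's own, possibly nonlinear, encoding function to the inputs that are still present. Since each internal edge has a unique tail, the inserted phases $\sum_{e\in\overline{\Edge}}c_eg_e(x)$ are matched one-for-one by correction phases of the opposite sign, so after all non-target nodes are processed the state is, up to a fixed product state of the measured registers, $\sum_x\alpha_x\bigl(\bigotimes_i\ket{x_i}_{R_i}\bigr)\bigl(\bigotimes_i\ket{g_{e_i}(x)}_{R_{e_i}}\bigr)$.

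It remains to disentangle the source registers. First each $t_i$ applies $\ket a\mapsto\ket{\pi_i(a)}$ to $R_{e_i}$, which then holds $\ket{x_i}$. Next each $s_i$ applies $F$ to $R_i$ and measures, obtaining a uniform $c_i\in\Int_d$ and leaving the phase $\omega^{\sum_ic_ix_i}$. Now I run the given classical solution with $c_i$ as the message of $s_i$; by hypothesis this delivers $c_i$ to $t_i$ while placing one symbol of $\Sigma$ on each edge, in the forward direction. Each $t_i$ then applies the diagonal unitary $\ket b\mapsto\omega^{-c_ib}\ket b$ to $R_{e_i}$, which (as $R_{e_i}$ holds $x_i$) contributes $\omega^{-c_ix_i}$ and cancels the remaining phase, so the targets are left holding $\sum_x\alpha_x\ket{x_1}_{t_1}\cdots\ket{x_k}_{t_k}=\ket\psi$ and every other register is in a product state. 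For the communication count, the second stage sends at most one symbol of $\Sigma$ per internal edge (in the reverse direction) and the third stage at most one per edge (in the forward direction), so at most two symbols, that is $2\ceil{\log_2\abs{\Sigma}}$ bits, per edge, hence at most $2\abs{\Edge}\ceil{\log_2\abs{\Sigma}}$ bits in all. I expect the main obstacle to be the correctness of the second stage: checking that the reverse-topological schedule keeps every node's input registers alive until its own phase correction is applied, and that the phases introduced by the Fourier measurements telescope exactly to the identity; once that is in hand, the remaining work is bookkeeping.
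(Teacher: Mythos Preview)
Your proposal is correct and matches the paper's proof almost exactly: the same three-part structure (forward quantum simulation, Fourier-measure internal-edge registers in reverse topological order with phase correction at the tail via the local encoding function, then Fourier-measure source registers and route the outcomes classically using the given solution), with the same communication accounting. The only cosmetic difference is that you make explicit the output bijection $\pi_i$ at each target, which the paper absorbs into the convention that the classical solution delivers $x_i$ itself on the unique edge into $t_i$.
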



The amount of free classical communication used in the protocol of our previous work~\cite{KLNR:2009b} was 
$km\abs{V}\ceil{\log_2 \abs{\Sigma}}$ bits, where $m$ denotes the maximal fan-in over all nodes in $G$ 
(note that $|E|$ is at most $O(m|V|)$ in the worst case, and is much smaller than $m|V|$ in most cases). 
The bound in Ref.~\cite{KLNR:2009b} actually supposes that classical information can be sent 
between any two nodes of the graph. If the classical communication is restricted to adjacent nodes, 
then the amount of communication can increase by a factor
corresponding to the length of the longest path from an internal node to a target node, and, in general, the number of elements of $\Sigma$ that are sent through a given edge depends on $k$, $\abs{V}$ and $m$.
In comparison, Theorem~\ref{theorem1} enables us to perform quantum network coding 
by sending at most two elements of $\Sigma$ between adjacent nodes.    

Notice that the classical communication bound of Theorem \ref{theorem1} 
is described independently of the original classical coding scheme. In fact,
if there exists a coding scheme on a subgraph $(V,\Edge')$ of $G$, where $\Edge'\subset \Edge$
(e.g., if a part of the graph is not involved in the original classical encoding scheme), 
then the factor $\abs{\Edge}$ can be improved in a straightforward way.  
We can then also describe the complexity of a quantum $k$-pair problem in term of the complexity 
of the best classical protocol for the corresponding classical task (i.e., in term of the smallest subset 
$\Edge'\subseteq \Edge$ such that a solution over $(V,\Edge')$ exists).
A specific statement of this observation for the case $|\Sigma|=2$ follows.
\begin{corollary}
Let $G=(\Vertice,\Edge)$ be a directed acyclic graph and $(s_1,t_1),\ldots,(s_k,t_k)$ be $k$ pairs of nodes in $V$.
Suppose that there exists a solution over $\{0,1\}$ to the associated classical $k$-pair problem using a total amount of $C$ bits of communication.
Then there exists a quantum protocol over $\Complex^{2}$ for the corresponding quantum $k$-pair problem that sends in total $C$ qubits of communication and 
$2C$ bits of classical communication.
\end{corollary}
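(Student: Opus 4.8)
The plan is to specialize the proof of Theorem~\ref{theorem1} and track where the bound $2\abs{\Edge}\ceil{\log_2\abs{\Sigma}}$ is actually loose. With $\abs{\Sigma}=2$, each element of $\Sigma$ is one bit, so Theorem~\ref{theorem1} as stated already gives a quantum protocol over $\Complex^2$ using at most $2\abs{\Edge}$ bits of classical communication and $\abs{\Edge}$ qubits of quantum communication. The point of the corollary is to replace the crude factor $\abs{\Edge}$ by the true communication cost $C$ of the best classical solution. The observation is that the protocol of Theorem~\ref{theorem1} only ever uses edges of $G$ that carry information in the classical coding scheme being simulated: an edge on which the classical scheme transmits nothing (formally, an edge not in the support $\Edge'$ of the classical solution, in the sense of the paragraph preceding the corollary) is never used in any of the three appearances of the scheme --- neither the quantum simulation of the encoding, nor the reverse-direction phase-correction, nor the final forwarding of the source nodes' measurement outcomes.

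Concretely, I would first invoke the refined form of Theorem~\ref{theorem1} noted in the text: if a classical solution exists on a subgraph $(\Vertice,\Edge')$ with $\Edge'\subseteq\Edge$, then the quantum protocol can be run using only the edges in $\Edge'$, sending at most two elements of $\Sigma$ per edge in $\Edge'$, i.e.\ at most $2\abs{\Edge'}\ceil{\log_2\abs{\Sigma}}$ bits total. Setting $\abs{\Sigma}=2$ and $\Edge'$ equal to the set of edges actually used by the given $C$-bit classical solution, we have $\abs{\Edge'}\le C$ (an edge that is used must carry at least one bit in at least one round, and with binary alphabet one bit per edge-use; more carefully, a classical solution transmitting a total of $C$ bits touches at most $C$ distinct edges). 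Hence the quantum simulation uses at most $\abs{\Edge'}\le C$ qubits of quantum communication and at most $2\abs{\Edge'}\le 2C$ bits of classical communication, which is exactly the claimed bound.

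The one subtlety I would make sure to address is what ``total amount of $C$ bits of communication'' means for the classical solution and why it dominates $\abs{\Edge'}$. In the unit-capacity model each internal edge of $G$ carries exactly one symbol of $\Sigma$ per use of the network, so a classical coding scheme on subgraph $(\Vertice,\Edge')$ that is invoked once transmits exactly $\abs{\Edge'}$ symbols $= \abs{\Edge'}$ bits when $\abs{\Sigma}=2$; thus $C=\abs{\Edge'}$ for a single-shot solution, and $C\ge\abs{\Edge'}$ in general (a scheme using fewer than $\abs{\Edge'}$ total bits simply does not use some edge of $\Edge'$, contradicting minimality of $\Edge'$). I expect this bookkeeping --- pinning down the communication measure for the classical protocol and confirming the three-fold reuse of the classical scheme stays confined to $\Edge'$ --- to be the only real content; everything else is a direct reading-off from the proof of Theorem~\ref{theorem1}, so there is no substantial obstacle beyond making the accounting precise.
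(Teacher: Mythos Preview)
Your proposal is correct and follows exactly the paper's approach: the paper does not give a separate proof of the corollary but presents it as a ``specific statement'' of the observation (stated just before the corollary) that Theorem~\ref{theorem1} applies equally well to any subgraph $(V,\Edge')$ supporting a classical solution, so with $\abs{\Sigma}=2$ one gets $\abs{\Edge'}=C$ qubits and $2\abs{\Edge'}=2C$ classical bits. Your additional bookkeeping on why $C=\abs{\Edge'}$ in the unit-capacity single-shot model is more explicit than anything in the paper, but entirely in the same spirit.
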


\section{Protocol}\label{sec:protocol}

As in Ref.~\cite{KLNR:2009b}, the basic strategy for proving Theorem \ref{theorem1} is 
to perform a quantum simulation of the classical coding scheme, while the simulation method is more elaborated.
Before presenting the proof, we need some preliminaries.

\subsection{Quantum operators}
Let $\Sigma$ be a finite set.
In the quantum setting, we suppose that each register contains a quantum state 
over $\calH=\Complex^{|\Sigma|}$, and denote by $\{\ket{z}\}_{z\in \Sigma}$ an orthonormal basis of $\calH$.
We use the notation $\ket{0_{\calH}}$ to refer to an arbitrary vector of the basis 
that will be used to initialize registers.
Let $\sigma$ be an arbitrary bijection from $\Sigma$ to the set of integers $\{0,\ldots,\abs{\Sigma}-1\}$. 
For convenience we denote $\overline x=\sigma(x)$ for any element $x\in \Sigma$. 

We define a unitary operator $\unitaryW$ over the Hilbert space $\calH$ as follows: 
for any $y\in \Sigma$, the operator $\unitaryW$ maps the basis state $\ket{y}$ to the state
$$
\frac{1}{\sqrt {|\Sigma|}}\sum_{z\in \Sigma}\exp\Bigl(\frac{2\pi \iota \overline{y}\cdot \overline{z}}{\abs{\Sigma}}\Bigr)\ket{z}.
$$
Note that $\unitaryW$ is basically the quantum Fourier transform over the cyclic group of order $\abs{\Sigma}$. 

A measurement of a quantum state over the Hilbert space $\calH$ in the Fourier basis 
consists in applying the operator $\unitaryW$ to the quantum state, 
and then measuring it in the basis $\{\ket{z}\}_{z\in \Sigma}$. 
The measurement outcome is an element of $\Sigma$.
Notice that, if the quantum state measured in the Fourier basis is $\ket{y}$, for some $y\in\Sigma$, 
and the measurement outcome is $z\in\Sigma$, then the state after measurement
becomes $\exp\bigl(\frac{2\pi \iota \overline{y}\cdot \overline{z}}{\abs{\Sigma}}\bigr)\ket{z}.$

Let $m$ and $n$ be two positive integers and $f_1,\ldots,f_n$ be $n$ functions from $\Sigma^m$ to $\Sigma$.
Let $\unitaryU_{f_1,\ldots,f_n}$ be a unitary operator over the Hilbert space ${\calH^{\otimes m}\otimes\calH^{\otimes n}}$ 
such that, for any elements $y_1,\ldots,y_m$ in $\Sigma$, the operator
$\unitaryU_{f_1,\ldots,f_n}$ maps the basis state $\ket{y_1,\ldots,y_m}\ket{0_{\calH},\ldots,0_{\calH}}$ to the state
$\ket{y_1,\ldots,y_m}\ket{f_{1}(y_1,\ldots,y_m),\ldots,f_{n}(y_1,\ldots,y_m)}$. 

\subsection{Global encoding functions}
A coding scheme over a directed acyclic graph $G=(\Vertice,\Edge)$ 
naturally induces a {\em global encoding function} associated with each edge in $\Edge$.
Let $\myvec{e}$ be an edge in $\Edge$. Then the global encoding function $g_{\myvec{e}}\colon \Sigma^k\to \Sigma$ 
associated with the edge $\myvec{e}$ is the function of the variables $x_1,\ldots, x_k$ representing the message sent through this
edge by the coding scheme when the input is $x_1,\ldots, x_k$. Since $G$ is acyclic, the global encoding functions can be defined directly as follows.
\begin{definition}\label{def:global}
Let $G=(\Vertice,\Edge)$ be a directed acyclic graph and $(s_i,t_i)$ be $k$ pairs of nodes in $\Vertice$, 
for $i\bound{1}{k}$. 
Consider an encoding scheme for this $k$-pair problem defined by, for each node $v\in\Vertice$ with fan-in $m$ and fan-out $n$, functions $f_{v,1},\ldots,f_{v,n}$ from $\Sigma^m$ to $\Sigma$ 
representing the encoding performed at $v$. Then the global encoding function $g_{\myvec{e}}\colon \Sigma^k\to \Sigma$ associated with this encoding scheme, for each $\myvec{e}\in\Edge$, is inductively defined as follows.
\begin{itemize}
\item[(a)]
Suppose that $\myvec{e}$ is an edge in $\Edge$ of the form $(s_i,u)$, with $i\bound{1}{k}$ and $u\in \Vertice$. Suppose that $\myvec{e}$ is the $\ell$-th outgoing edge of $s_i$.  
Then $g_{\myvec{e}}(x_1,\ldots,x_k)=f_{s_i,\ell}(x_i)$ for any $x_1,\ldots,x_k\in\Sigma$.
\item[(b)]
Suppose that $\myvec{e}$ is an edge in $\Edge$ of the form $(v,w)$ with $v\notin\{s_1,\ldots,s_k\}$. Suppose  that $\myvec{e}$ is the $\ell$-th outgoing edge of $v$
and denote by $(u_1,v),\ldots,(u_m,v)$ the incoming edges of node $v$ (see Fig.~\ref{fig:globalfunctions}). 
Then, for any $x_1,\ldots,x_k\in\Sigma$,
\begin{equation}\label{eq:global-local}
g_{(v,w)}(x_1,\ldots,x_k)=f_{v,\ell}(g_{(u_1,v)}(x_1,\ldots,x_k),\ldots,g_{(u_m,v)}(x_1,\ldots,x_k)).
\end{equation}
\end{itemize}
\end{definition}
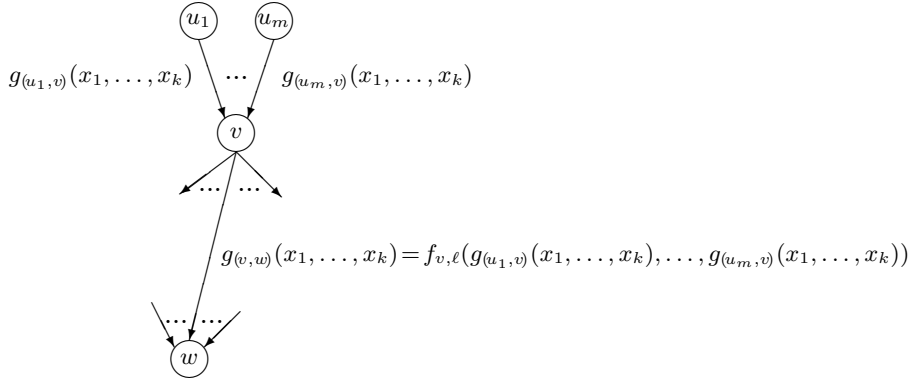
\begin{figure}[t]
\begin{center}
\setlength{\unitlength}{0.25mm}
\begin{picture}(140,200)(100,40)
\put( 25, 40){\circle{20}}
\put( 25, 40){\makebox(0,0){\footnotesize $w$}}
\put( 50, 160){\circle{20}}
\put( 50, 160){\makebox(0,0){\footnotesize $v$}}
\put( 30, 220){\makebox(0,0){\footnotesize $u_1$}}
\put( 70, 220){\makebox(0,0){\footnotesize $u_m$}}
\put( 30, 220){\circle{20}}
\put( 70, 220){\circle{20}}
\put( 50, 150){\vector(-1,-4){25}}
\put( 36, 130){\makebox(0,0){$...$}}
\put( 57, 130){\makebox(0,0){$...$}}
\put( 37, 60){\makebox(0,0){$...$}}
\put( 50, 190){\makebox(0,0){$...$}}
\put( 18, 60){\makebox(0,0){$...$}}

\put(-22, 190){\makebox(0,0){\footnotesize $g_{(\!u_1,v\!)}(x_1,\ldots,x_k)$}}
\put( 125, 190){\makebox(0,0){\footnotesize $g_{(\!u_m,v\!)}(x_1,\ldots,x_k)$}}
\put( 225, 95){\makebox(0,0){\footnotesize $g_{(\!v,w\!)}(x_1,\ldots,x_k)\!=\!f_{v,\ell}(g_{(\!u_1,v\!)}(x_1,\ldots,x_k),\ldots,g_{(\!u_m,v\!)}(x_1,\ldots,x_k))$}}

\put( 50, 150){\vector(1,-1){24}}
\put( 50, 150){\vector(-4,-3){30}}
\put( 52, 65){\vector(-1,-1){19}}
\put( 30, 210){\vector(1,-3){14}}
\put( 70, 210){\vector(-1,-3){14}}
\put( 5, 70){\vector(1,-2){12}}

\end{picture}
\caption{
  Illustration of the notion of global encoding functions. Here node $v$ has $m$ incoming edges labelled in increasing order $(u_1,v),\cdots,(u_m,v)$,
  and edge $(v,w)$ is the $\ell$-th outgoing edge of $v$.
   \label{fig:globalfunctions}
}
\end{center}
\end{figure}

\subsection{Proof of Theorem \ref{theorem1}}
Now we are ready to give the proof of Theorem \ref{theorem1} 
(see Appendix for an illustration of our strategy on the butterfly graph).

\begin{proof}[Proof of Theorem \ref{theorem1}]
Let $G=(\Vertice,\Edge)$ be a graph on which there exists a solution
to the classical $k$-pair problem associated with the pairs $(s_i,t_i)$. 
Without loss of generality, we suppose that all nodes in $V\backslash\{t_1,\ldots,t_k \}$
have nonzero fan-in and nonzero fan-out (remember that each source node $s_i$ has one virtual incoming edge).
For each node $V\backslash\{t_1,\ldots,t_k \}$ 
with fan-in $m\ge 1$ and fan-out $n\ge 1$, let $f_{v,1},\ldots,f_{v,n}$ be the coding operations performed at node $v$
in the solution, where each function $f_{v,i}$ is from $\Sigma^m$ to $\Sigma$.
 
Suppose that the input state of the quantum task is
$$
\ket{\psi_S}_{(\sfS_1,\ldots,\sfS_k)}=\sum_{x_1,\ldots,x_k\in \Sigma}\alpha_{x_1,\ldots,x_k}\ket{x_1}_{\sfS_1}\otimes\cdots\otimes\ket{x_k}_{\sfS_k},
$$
where the $\alpha_{x_1,\ldots,x_k}$'s are complex numbers such that $\sum_{x_1,\ldots,x_k\in \Sigma}\abs{\alpha_{x_1,\ldots,x_k}}^2=1$.
Here, for each $i\bound{1}{k}$, $\sfS_i$ is a register received by the node $s_i$ from its incoming virtual edge.
Our protocol consists of the three parts that are described in details below.\vspace{2mm}

\noindent{\bf Part 1.}
First, we simulate the solution to the associated classical task node by node. 
In our simulation, each node in $\Vertice$ will receive one quantum register along each incoming edge 
(remember that each source $s_i$ receives $\sfS_i$ from its incoming virtual edge).
Each node in $V\backslash\{t_1,\ldots,t_k\}$ will perform,  in the same order as in the classical protocol (e.g., any topological order of the nodes of $G$), a quantum operation
on the registers it receives, and send new registers through its outgoing edges. 

More precisely, let $v\in\Vertice\backslash\{t_1,\ldots,t_k\}$ be a node of $G$ with fan-in $m$ and fan-out $n$. 
Let $\sfQ_1,\ldots,\sfQ_m$ denote the quantum registers received by the incoming edges.
The coding performed at node $v$ is simulated as follows:  $n$ quantum registers
$\sfQ'_1,\ldots,\sfQ'_n$, each initialized to $\ket{0_{\calH}}$, are introduced, and then
the operator $\unitaryU_{f_1,\ldots,f_n}$ is applied to the registers $(\sfQ_1,\ldots,\sfQ_m,\sfQ'_1,\ldots,\sfQ'_n)$.
The registers $\sfQ'_1,\ldots,\sfQ'_n$ are then sent along the $n$ outgoing edges of $v$ (i.e., $\sfQ'_i$ is sent along the $i$-th outgoing edge of $v$), and the registers $\sfQ_1,\ldots,\sfQ_m$ 
are kept at node $v$.

Such a simulation is done for all the nodes in $\Vertice\backslash\{t_1,\ldots,t_k\}$. Notice that exactly one register is introduced per edge in $\Edge$. These registers can then be indexed by the edges in $\Edge$: 
the register associated with an edge of the form $(u,v)\in\Edge$ is created at node $u$, then sent from $u$ to $v$ through edge $(u,v)$ in the procedure described above, and is finally owned by node $v$. We will denote 
the registers associated with the output edges (and owned by the target nodes) by $\sfT_1,\ldots, \sfT_k$ and the other registers by $\sfR_{\myvec{e}}$ for each $\myvec{e}\in\overline{E}$. 
Remember that, additionally, each source $s_i$ owns the input register $\sfS_i$, for $i\bound{1}{k}$. Then the overall quantum state consists of $\abs{\Edge}+k$ registers and, since the coding scheme under consideration solves the classical task, 
 is of the form 
\begin{equation*}\label{eq:state_part1}
\begin{split}
\sum_{x_1,\ldots,x_k\in \Sigma}\alpha_{x_1,\ldots,x_k}\ket{x_1}_{\sfS_1}\ket{x_1}_{\sfT_1}\otimes\cdots\otimes\ket{x_k}_{\sfS_k}\ket{x_k}_{\sfT_k}\otimes
\Big(\bigotimes_{\myvec{e}\in\overline{\Edge}}\ket{g_{\myvec{e}}(x_1,\ldots,x_k)}_{\sfR_{\myvec{e}}}\Big),
\end{split}
\end{equation*}
where $g_{\myvec{e}}$'s are the global encoding functions associated with the considered coding scheme, as defined in Definition \ref{def:global}. \vspace{2mm}

\noindent{\bf Part 2.}
Second, we remove all the intermediate registers~$\sfR_{\myvec{e}}$.
Consider an edge $(v,w)\in\overline{\Edge}$ and the register $\sfR_{(v,w)}$ associated with it. This register is owned by the internal node $w$ at the end of the procedure described in Part 1.
Node $w$ first measures register  $\sfR_{(v,w)}$ in the Fourier basis.
Suppose that the outcome of the measurement is $a_{(v,w)}\in\Sigma$. The quantum state then becomes
\[
\begin{split}
&
\sum_{x_1,\ldots,x_k\in \Sigma}\alpha_{x_1,\ldots,x_k}\exp\Big(2\pi\iota\frac{\overline{a_{(v,w)}}\cdot \overline{g_{(v,w)}(x_1,\ldots,x_k)}}{\abs{\Sigma}}\Big)\times
\ket{x_1}_{\sfS_1}\ket{x_1}_{\sfT_1}\otimes\cdots\\
&\hspace{10mm}\cdots\otimes\ket{x_k}_{\sfS_k}\ket{x_k}_{\sfT_k}\otimes
\ket{a_{(v,w)}}_{\sfR_{(v,w)}}\otimes\Big(\bigotimes_{\myvec{e}\in\overline{\Edge}\backslash \{(v,w)\}}\ket{g_{\myvec{e}}(x_1,\ldots,x_k)}_{\sfR_{\myvec{e}}}
\Big).
\end{split}
\]
Register $\sfR_{(v,w)}$ is not anymore entangled with the other registers, and can then be disregarded.
Node $w$ then sends the element $a_{(v,w)}$ of $\Sigma$ to the node $v$ using classical communication. 
Now suppose that $(v,w)$ is the $\ell$-th outgoing edge of $v$ and denote by $(u_1,v),\ldots,(u_m,v)$ the incoming edges of node $v$ (see Fig.~\ref{fig:globalfunctions}). 
Node $v$ then applies the map $\unitaryY_v$ to its registers~$(\sfR_{(u_1,v)},\ldots,\sfR_{(u_m,v)})$, where $\unitaryY_v$ is defined as
\begin{eqnarray*}
\unitaryY_v\colon\ket{z_1,\ldots,z_m}&\mapsto&
\exp\Big({-2\pi \iota \frac{\overline{a_{(v,w)}}\cdot \overline{f_{v,\ell}(z_1,\ldots,z_m)}}{\abs{\Sigma}}}\Big)\ket{z_1,\ldots,z_m}
\end{eqnarray*}
for any $z_1,\ldots,z_m\in \Sigma$. 
From Equation~(\ref{eq:global-local}), this implies that the quantum state becomes
\[
\sum_{x_1,\ldots,x_k\in \Sigma}\alpha_{x_1,\ldots,x_k}
\ket{x_1}_{\sfS_1}\ket{x_1}_{\sfT_1}\otimes\cdots\otimes\ket{x_k}_{\sfS_k}\ket{x_k}_{\sfT_k}
\otimes\Big(\bigotimes_{
\begin{subarray}{c}
\myvec{e}\in\overline{\Edge}\\
\myvec{e}\neq (v,w)
\end{subarray}}
\ket{g_{\myvec{e}}(x_1,\ldots,x_k)}_{\sfR_{\myvec{e}}}
\Big),
\]
where the register $\sfR_{(v,w)}$ has been disregarded. Notice that only one element of $\Sigma$ has been sent (from $w$ to $v$, i.e., in the reverse direction of the edge~$(v,w)$).

The above procedure is performed by each internal node (i.e., each node in~$\Vertice\backslash\{s_1,\ldots,s_k,$ $t_1,\ldots,t_k\}$) in a reverse topological order
(i.e., an ordering in which each node comes before all nodes from which it has incoming edges)
 --- this latter condition is crucial for the correctness of the technique. 
The final state is
\begin{equation}\label{eq:state_part2}
\sum_{x_1,\ldots,x_k\in \Sigma}\alpha_{x_1,\ldots,x_k}\ket{x_1}_{\sfS_1}\ket{x_1}_{\sfT_1}\otimes\cdots\otimes\ket{x_k}_{\sfS_k}\ket{x_k}_{\sfT_k}.
\end{equation}
The total number of elements of $\Sigma$ sent as classical communication in Part 2 is $\abs{\overline{E}}$. More precisely, one element of $\Sigma$ is sent per edge,
in the reverse direction of the edge. \vspace{2mm}

\noindent{\bf Part 3.}
Finally, we remove the registers $\sfS_1,\ldots,\sfS_k$.
Remember that each register $\sfS_i$ is owned by node $s_i$, for $i\bound{1}{k}$. 
Each node $s_i$ then measures its register $\sfS_i$ in the Fourier basis, obtaining an element $b_i\in\Sigma$. The state then becomes
$$
\sum_{x_1,\ldots,x_k\in \Sigma}\alpha_{x_1,\ldots,x_k}\exp\Big(2\pi\iota\frac{\overline{b_1}\cdot \overline{x_1}+\cdots+\overline{b_k}\cdot \overline{x_k}}{\abs{\Sigma}}\Big)
\ket{b_1}_{\sfS_1}\ket{x_1}_{\sfT_1}\otimes\cdots\otimes\ket{b_k}_{\sfS_k}\ket{x_k}_{\sfT_k}.
$$
The registers $\sfS_1,\ldots,\sfS_k$ can then be disregarded.
Each source $s_i$ now sends $b_i$ to the target~$t_i$ using classical communication. 
Notice that this is precisely an instance of the classical $k$-pair problem we are considering, 
and hence the assumed classical network coding protocol is available. 
Therefore, this can be done by sending at most one element of $\Sigma$ per edge. 
As the last step, each target node $t_i$ for $i\bound{1}{k}$ 
applies the map $\unitaryZ_i$ to its register~$\sfT_i$, where $\unitaryZ_i$ maps
the basis state $\ket{x}$ to the state $\exp\big(-2\pi \iota \frac{\overline{b_i}\cdot \overline{x}}{\abs{\Sigma}}\big)\ket{x}$,
for any $x\in \Sigma$. This corrects the phase and the resulting state is 
$$
\sum_{x_1,\ldots,x_k\in \Sigma}\alpha_{x_1,\ldots,x_k}
\ket{x_1}_{\sfT_1}\otimes\cdots\otimes\ket{x_k}_{\sfT_k}
=
\ket{\psi_S}_{(\sfT_1,\ldots,\sfT_k)}.$$
The amount of classical communication sent in Part 3 is at most one element of $\Sigma$ per edge 
(in the direction of the edge), i.e., at most $\abs{\Edge}$ elements of $\Sigma$ in total.\vspace{2mm}

This concludes the proof of Theorem \ref{theorem1}.
\end{proof}

\section{Sharing EPR-pairs}\label{section:EPR}
In this section we describe how our techniques can be used to create EPR-pairs between the sources and the targets of an 
instance of the $k$-pair problem using one qubit of quantum communication and only one bit of classical communication per edge.

Let $\calH$ be a Hilbert space of dimension $2$ with orthonormal basis $\{\ket{0},\ket{1}\}$. 
Let $\sfA$ and $\sfB$ be two registers over $\calH$.
Remember that an EPR-pair over $(\sfA,\sfB)$, denoted by $\ket{\mathrm{\Psi}}_{(\sfA,\sfB)}$, is the quantum state $\frac{1}{\sqrt{2}}(\ket{0}_{\sfA}\ket{0}_{\sfB}+\ket{1}_{\sfA}\ket{1}_{\sfB})$. 

Let $G=(\Vertice,\Edge)$ be a directed acyclic graph and $(s_1,t_1),\ldots,(s_k,t_k)$ be $k$ pairs of nodes.
The corresponding \emph{EPR-pair sharing problem} asks to create $k$ EPR-pairs
$\ket{\mathrm{\Psi}}_{(\sfS_i \sfT_i)}$ where, for $i\bound{1}{k}$, each $\sfS_i$ is a register owned by the source 
$s_i$ and each $\sfT_i$ is a register owned by the target $t_i$. 
We consider the model where each edge of  $G$ can transmit one qubit and classical communication is only allowed 
between any two adjacent nodes.

Suppose that the classical $k$-pair problem associated to $G$ and the 
$k$ pairs $(s_i,t_i)$ has a solution over $\{0,1\}$. 
Each source node $s_i$ creates a register $\sfS_i$ initialized to the state $\ket{0}_{\sfS_i}$, and applies 
a Hadamard transform on it. The quantum state obtained is 
$$\ket{\psi}_{(\sfS_1,\ldots,\sfS_k)}=\frac{1}{\sqrt{2^{k}}}(\ket{0}_{\sfS_1}+\ket{1}_{\sfS_1})\otimes\cdots\otimes(\ket{0}_{\sfS_k}+\ket{1}_{\sfS_k}).$$
Now consider the quantum $k$-pair problem associated to $G$ and the 
$k$ pairs $(s_i,t_i)$ on the input $\ket{\psi}_{(\sfS_1,\ldots,\sfS_k)}$.
Applying Part 1 and Part 2 of the protocol described in Section \ref{sec:protocol} gives (see Formula (\ref{eq:state_part2})) the state
\[
\frac{1}{\sqrt{2^{k}}}(\ket{0}_{\sfS_1}\ket{0}_{\sfT_1}+\ket{1}_{\sfS_1}\ket{1}_{\sfT_1})\otimes\cdots\otimes(\ket{0}_{\sfS_k}\ket{0}_{\sfT_k}+\ket{1}_{\sfS_k}\ket{1}_{\sfT_k})
=\ket{\mathrm{\Psi}}_{(\sfS_1,\sfT_1)}\otimes\cdots\otimes\ket{\mathrm{\Psi}}_{(\sfS_k,\sfT_k)},
\]
while only one bit of classical communication is sent per edge in $\overline{\Edge}$, in the reverse direction of the edge.
Here $\sfT_i$ denotes a register owned by the target node $t_i$, for $i\bound{1}{k}$.
We thus obtain the following result.
\begin{theorem}\label{theorem2}
Let $G=(\Vertice,\Edge)$ be a directed acyclic graph and $(s_1,t_1),\ldots,(s_k,t_k)$ be $k$ pairs of nodes in $V$.
Suppose that there exists a solution over $\{0,1\}$ to the associated classical $k$-pair problem.
Then there exists a quantum protocol that solves the corresponding EPR-pair sharing problem using only one bit of classical communication
per edge (in the reverse direction of the edge).
\end{theorem}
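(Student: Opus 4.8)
The plan is to reuse, essentially verbatim, the protocol constructed in the proof of Theorem~\ref{theorem1}, but to halt it one part earlier. First, each source $s_i$ prepares a fresh register $\sfS_i$ in the state $\ket{0}$ and applies a Hadamard transform, producing the joint state $\ket{\psi}=\frac{1}{\sqrt{2^{k}}}\bigotimes_{i=1}^{k}(\ket{0}_{\sfS_i}+\ket{1}_{\sfS_i})$; this is precisely an input to the quantum $k$-pair problem with the uniform amplitudes $\alpha_{x_1,\ldots,x_k}=2^{-k/2}$. I would then run Part~1 and Part~2 of the protocol of Section~\ref{sec:protocol} on this input, and simply omit Part~3.

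Concretely: Part~1 performs the node-by-node quantum simulation of the classical coding scheme, introducing one register per edge and sending one qubit in the forward direction of each edge; by the analysis in the proof of Theorem~\ref{theorem1}, since the classical scheme solves the $k$-pair problem over $\{0,1\}$, the resulting global state contains, for each $\myvec{e}\in\overline{\Edge}$, an intermediate register $\sfR_{\myvec{e}}$ holding $\ket{g_{\myvec{e}}(x_1,\ldots,x_k)}$, alongside $\ket{x_i}_{\sfS_i}\ket{x_i}_{\sfT_i}$ for each $i$. Part~2 then removes every $\sfR_{\myvec{e}}$: the head of each internal edge measures the corresponding register in the Fourier basis (which, for $\abs{\Sigma}=2$, is just the Hadamard basis), sends the single-bit outcome backwards along the edge, and the tail node applies the diagonal correction $\unitaryY_v$, with the internal nodes processed in reverse topological order. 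By the correctness argument for Part~2 --- which is stated for arbitrary amplitudes --- the state afterwards is exactly (\ref{eq:state_part2}) specialised to the uniform amplitudes, namely
\[
\frac{1}{\sqrt{2^{k}}}\bigotimes_{i=1}^{k}\bigl(\ket{0}_{\sfS_i}\ket{0}_{\sfT_i}+\ket{1}_{\sfS_i}\ket{1}_{\sfT_i}\bigr)
=\bigotimes_{i=1}^{k}\ket{\mathrm{\Psi}}_{(\sfS_i,\sfT_i)},
\]
with $\sfS_i$ still held by $s_i$ and $\sfT_i$ held by $t_i$ --- precisely the desired EPR-pairs. For the communication bound I would simply observe that Part~1 sends only qubits, and Part~2 sends exactly one classical bit for each edge in $\overline{\Edge}$, always in the reverse direction, and nothing at all along the $k$ output edges; hence at most one bit of classical communication per edge, in the reverse direction.

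All the substantive work is inherited from Theorem~\ref{theorem1}, so I do not expect a genuine obstacle; the single point that deserves care is that halting the protocol at the end of Part~2 really does leave $\sfS_i$ entangled with $\sfT_i$ as claimed, rather than in some partially-corrected state. This holds because, although Part~2 does apply phase corrections to $\sfS_i$ whenever $s_i$ is the tail of an internal edge (via $\unitaryY_{s_i}$ acting on the input register), these are exactly the corrections that cancel the corresponding Fourier-measurement phases, so the state (\ref{eq:state_part2}) is reached with every $\sfS_i$ left unmeasured and in the correct joint state with $\sfT_i$; Part~3, which we drop, was only needed in Theorem~\ref{theorem1} to additionally discard the $\sfS_i$'s and route their content to the targets. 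The conceptual content is thus the observation that the $k$-pair simulation protocol, truncated before its last part, is already an EPR-sharing protocol --- trading away Part~3's forward classical communication in exchange for keeping the source halves of the pairs.
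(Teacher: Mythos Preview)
Your proposal is correct and follows essentially the same approach as the paper: prepare the uniform superposition at the sources via Hadamard, run Parts~1 and~2 of the Theorem~\ref{theorem1} protocol, and observe that Formula~(\ref{eq:state_part2}) specialised to these amplitudes is already the tensor product of EPR-pairs, with Part~2 contributing exactly one backward classical bit per internal edge. Your additional remark about the phase corrections $\unitaryY_{s_i}$ acting on the $\sfS_i$ registers during Part~2 is accurate and a helpful clarification, but the overall argument matches the paper's.
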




\section*{Acknowledgments}
The authors are grateful to Kazuo Iwama for helpful comments on this work.

\bibliographystyle{abbrv}
\bibliography{KobLeGNisRotQIP11}

\begin{thebibliography}{10}

\bibitem{ACLY:2000}
R.~Ahlswede, N.~Cai, S.-Y.~R. Li, and R.~W. Yeung.
\newblock Network information flow.
\newblock {\em IEEE Transactions on Information Theory}, 46(4):1204--1216,
  2000.

\bibitem{DFZ:2005}
R.~Dougherty, C.~F. Freiling, and K.~Zeger.
\newblock Insufficiency of linear coding in network information flow.
\newblock {\em IEEE Transactions on Information Theory}, 51(8):2745--2759,
  2005.

\bibitem{DZ:2006}
R.~Dougherty and K.~Zeger.
\newblock Nonreversibility and equivalent constructions of multiple-unicast
  networks.
\newblock {\em IEEE Transactions on Information Theory}, 52(11):5067--5077,
  2006.

\bibitem{FS:2007}
C.~Fragouli and E.~Soljanin.
\newblock {\em Network coding fundamentals}.
\newblock Now Publishers Inc., 2007.

\bibitem{HKL:2006}
N.~Harvey, R.~Kleinberg, and A.~Lehman.
\newblock On the capacity of information networks.
\newblock {\em IEEE Transactions on Information Theory}, 52(6):2345--2364,
  2006.

\bibitem{H:2007}
M.~Hayashi.
\newblock Prior entanglement between senders enables perfect quantum network
  coding with modification.
\newblock {\em Physical Review A}, 76(4):040301(R), 2007.

\bibitem{HIN+:2007}
M.~Hayashi, K.~Iwama, H.~Nishimura, R.~Raymond, and S.~Yamashita.
\newblock Quantum network coding.
\newblock In {\em STACS 2007}, volume 4393 of {\em Lecture Notes in Computer
  Science}, pages 610--621, 2007.

\bibitem{HL:2008}
T.~Ho and D.~S. Lun.
\newblock {\em Network coding: an introduction}.
\newblock Cambridge University Press, 2008.

\bibitem{INPRY:2008}
K.~Iwama, H.~Nishimura, M.~Paterson, R.~Raymond, and S.~Yamashita.
\newblock Polynomial-time construction of linear network coding.
\newblock In {\em ICALP 2008}, volume 5125 of {\em Lecture Notes in Computer
  Science}, pages 271--282, 2008.

\bibitem{KLNR:2009b}
H.~Kobayashi, F.~{Le Gall}, H.~Nishimura, and M.~R{\" o}tteler.
\newblock General scheme for perfect quantum network coding with free classical
  communication.
\newblock In {\em ICALP 2009}, volume 5555 of {\em Lecture Notes in Computer
  Science}, pages 622--633, 2009.
\newblock {See} also arXiv preprint 0908.1457, August 2009.

\bibitem{KLN+:2010}
H.~Kobayashi, F.~{Le Gall}, H.~Nishimura, and M.~R{\"o}tteler.
\newblock {Perfect Quantum Network Communication Protocol Based on Classical
  Network Coding}.
\newblock In {\em {IEEE International Symposium on Information Theory}}, pages
  2686--2690, 2010.
\newblock {See} also arXiv preprint 0902.1299, February 2009.

\bibitem{LOW:2006}
D.~Leung, J.~Oppenheim, and A.~Winter.
\newblock Quantum network communication --- the butterfly and beyond.
\newblock {\em IEEE Transactions on Information Theory}, 56(7):3478--3490,
  2010.
\newblock {See} also arXiv preprint quant-ph/0608223, August 2006.

\bibitem{LL:2004}
Z.~Li and B.~Li.
\newblock Network coding in undirected networks.
\newblock In {\em 38th Annual Conference on Information Sciences and Systems},
  pages 257--262, 2004.

\bibitem{NC:2000}
M.~A. Nielsen and I.~L. Chuang.
\newblock {\em Quantum Computation and Quantum Information}.
\newblock Cambridge University Press, 2000.

\bibitem{R:2004}
S.~Riis.
\newblock Linear versus non-linear boolean functions in network flow.
\newblock In {\em 38th Annual Conference on Information Sciences and Systems},
  2004.

\bibitem{SS:2006}
Y.~Shi and E.~Soljanin.
\newblock On multicast in quantum networks.
\newblock In {\em 40th Annual Conference on Information Sciences and Systems},
  pages 871--876, 2006.

\bibitem{WS:2007}
C.-C. Wang and N.~B. Shroff.
\newblock Beyond the butterfly -- a graph-theoretic characterization of the
  feasibility of network coding with two simple unicast sessions.
\newblock In {\em IEEE International Symposium on Information Theory}, pages
  121--125, 2007.

\bibitem{Y:2007}
R.~W. Yeung.
\newblock {\em Information Theory and Network Coding}.
\newblock Springer-Verlag, 2007.

\bibitem{YLCZ:2006}
R.~W. Yeung, S.-Y.~R. Li, N.~Cai, and Z.~Zhang.
\newblock {\em Network coding theory}.
\newblock Now Publishers Inc., 2006.

\end{thebibliography}

\newpage 
\section*{Appendix: Example for our protocol}

We illustrate in this appendix the techniques developed in this paper 
for the butterfly network shown in Figure~\ref{fig:butterfly}. Our task is to send 
a quantum state $\ket{\psi_S}$ from the source nodes $s_1$ and $s_2$ to the target nodes $t_1$ and $t_2$. 
The task can be achieved
using the quantum protocol given in Theorem~\ref{theorem1}.
We give the explicit details for this example.
More precisely, we describe how the protocol simulates
the classical linear coding scheme over $\Sigma=\bbF_2$ (the finite field of size 2) presented in Fig.~\ref{fig:classicalbutterfly}.
Notice that this example does not require nonlinear coding, but we prefer to explain our protocol on this 
simple example rather than on known instances of the $k$-pair problem that require nonlinear coding 
(e.g., the networks proposed in Refs.~\cite{DFZ:2005,R:2004})
since the latter are rather complex. Moreover, we believe that this choice enables the reader 
to better compare our new protocol to the one proposed in our previous work~\cite{KLNR:2009b}.

Hereafter, all the registers are assumed to be single-qubit registers, 
i.e., $\calH=\Complex^2$. 
We denote by $\{\ket{z}\}_{z\in\bbF_2}$ an orthonormal basis of $\calH$. 
All the registers are supposed to be initialized to $\ket{0}$.

\begin{figure}[h]
\begin{center}
\setlength{\unitlength}{0.25mm}
\begin{picture}(220,230)
\thinlines
\put( 20, 220){\makebox(0,0){$\sfS_1$}}
\put(200, 220){\makebox(0,0){$\sfS_2$}}
\put( 20, 210){\vector(0,-1){20}}
\put(200, 210){\vector(0,-1){20}}
\thicklines
\put( 20, 180){\circle{20}}
\put( 20, 180){\makebox(0,0){\footnotesize $s_1$}}
\put( 20, 170){\vector(0,-1){120}}
\put( 10, 115){\makebox(0,0){$\sfR_1$}}
\put( 28.94, 175.53){\vector(2,-1){72.12}}
\put( 65, 170){\makebox(0,0){$\sfR_2$}}
\put(200, 180){\circle{20}}
\put(200, 180){\makebox(0,0){\footnotesize $s_2$}}
\put(200, 170){\vector(0,-1){120}}
\put(210, 115){\makebox(0,0){$\sfR_3$}}
\put(191.06, 175.53){\vector(-2,-1){72.12}}
\put(155, 170){\makebox(0,0){$\sfR_4$}}
\put(110, 135){\circle{20}}
\put(110, 135){\makebox(0,0){\footnotesize $n_1$}}
\put(110, 125){\vector(0,-1){30}}
\put(100, 115){\makebox(0,0){$\sfR_5$}}
\put(110,  85){\circle{20}}
\put(110,  85){\makebox(0,0){\footnotesize $n_2$}}
\put(101.06,  80.53){\vector(-2,-1){72.12}}
\put( 65,  75){\makebox(0,0){$\sfR_6$}}
\put(118.94,  80.53){\vector(2,-1){72.12}}
\put(155,  75){\makebox(0,0){$\sfR_7$}}
\put( 20,  40){\circle{20}}
\put( 20,  40){\makebox(0,0){\footnotesize $n_3$}}
\put(200,  40){\circle{20}}
\put(200,  40){\makebox(0,0){\footnotesize $n_4$}}
\put( 20,  30){\vector(-0,-1){20}}
\put(  8,  20){\makebox(0,0){$\sfT_2$}}
\put(200,  30){\vector( 0,-1){20}}
\put(212,  20){\makebox(0,0){$\sfT_1$}}
\put( 200, 00){\makebox(0,0){\footnotesize $t_1$}}
\put( 20, 00){\circle{20}}
\put( 20, 00){\makebox(0,0){\footnotesize $t_2$}}
\put( 200, 00){\circle{20}}

\end{picture}
\caption{
Quantum network coding through the butterfly network. 
Each edge has quantum capacity one. The task is to send a given input quantum state~$\ket{\psi_S}$ in ${(\sfS_1, \sfS_2)}$
 to ${(\sfT_1, \sfT_2)}$ in this order of registers.
 Here, the quantum register~$\sfS_1$~(resp.~$\sfS_2$) is possessed by the source node~$s_1$~(resp.~$s_2$),
 while the quantum register~$\sfT_1$~(resp.~$\sfT_2$) is possessed at the end of the protocol by the target node~$t_1$~(resp.~$t_2$).
 The protocol given in Theorem~\ref{theorem1} realizes
 perfect quantum transmission of $\ket{\psi_S}$.
 Each $\sfR_i$ or $\sfT_i$ indicates the quantum register to be sent
 along the corresponding edge in the protocol.
 \label{fig:butterfly}
}
\end{center}
\end{figure}
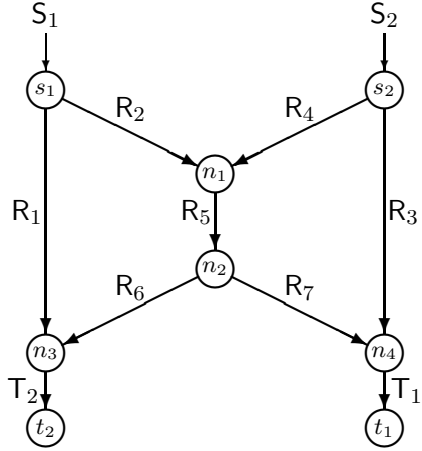

In this example, the unitary operator $\unitaryU_{f_I,f_I}$ is applied at nodes $s_1,s_2$ and $n_2$, 
where $f_I$ denotes the identity map over $\bbF_2$.
The operator $\unitaryU_{f_I,f_I}$ maps each basis state $\ket{y}\ket{z_1,z_2}$ to the state $\ket{y}\ket{z_1+y,z_2+y}$.
The quantum unitary operator $\unitaryU_{f_+}$ is applied at nodes $n_1,n_3$ and $n_4$, 
where $f_+\colon(\bbF_2)^2\to\bbF_2$ is the function mapping $(y_1,y_2)$ to $y_1+y_2$.   
The operator  $\unitaryU_{f_+}$ maps $\ket{y_1,y_2}\ket{z}$ to $\ket{y_1,y_2}\ket{z+y_1+y_2}$. 
Notice that both $\unitaryU_{f_I,f_I}$ and $\unitaryU_{f_+}$ can be implemented by using controlled-NOT operators.

Now we present our protocol for the example of Fig.~\ref{fig:butterfly}. Let 
$$
\ket{\psi_S}_{(\sfS_1,\sfS_2)}=
\alpha_{00}\ket{0}_{\sfS_1}\ket{0}_{\sfS_2}+\alpha_{01}\ket{0}_{\sfS_1}\ket{1}_{\sfS_2}+
\alpha_{10}\ket{1}_{\sfS_1}\ket{0}_{\sfS_2}+\alpha_{11}\ket{1}_{\sfS_1}\ket{1}_{\sfS_2}
$$
be the state that has to be sent from the source nodes to the target nodes. Here $\sfS_1$ (resp.~$\sfS_2$)
is a register received by the source $s_1$ (resp.~$s_2$) through its virtual incoming edge.
\vspace{2mm}

\noindent{\bf Part 1.}
Node $s_1$ (resp.~$s_2$) introduces two registers~$\sfR_1$~and~$\sfR_2$ (resp.~$\sfR_3$~and~$\sfR_4$), 
and applies the operator $\unitaryU_{f_I,f_I}$ over the registers $(\sfS_1, \sfR_1,\sfR_2)$ (resp.~over the registers $(\sfS_2, \sfR_3,\sfR_4)$). The resulting state is
\[
\begin{split}
&
 \alpha_{00}\ket{\bmzero}_{(\sfS_1, \sfR_1, \sfR_2)} \ket{\bmzero}_{(\sfS_2, \sfR_3, \sfR_4)}\\
  +&
  \alpha_{01}\ket{\bmzero}_{(\sfS_1, \sfR_1, \sfR_2)}\ket{\bmone}_{(\sfS_2, \sfR_3, \sfR_4)}\\
  +
  &
  \alpha_{10}\ket{\bmone}_{(\sfS_1, \sfR_1, \sfR_2)}\ket{\bmzero}_{(\sfS_2, \sfR_3, \sfR_4)}\\
  +&
  \alpha_{11}\ket{\bmone}_{(\sfS_1, \sfR_1, \sfR_2)}\ket{\bmone}_{(\sfS_2, \sfR_3, \sfR_4)}.
\end{split}
\]
Hereafter, let $\bmzero$ and $\bmone$ denote strings of all-zero and all-one,
respectively, of appropriate length (three here). 
Then $\sfR_1$~and~$\sfR_2$ are sent to $n_3$ and $n_1$, respectively,
while $\sfR_3$~and~$\sfR_4$ are sent to $n_4$ and $n_1$, respectively. 

Then the protocol proceeds with the simulation of the coding operation 
performed at node~$n_1$ in the classical coding scheme of
Fig.~\ref{fig:classicalbutterfly}: node~$n_1$ prepares a new register~$\sfR_5$ 
and applies the operator $\unitaryU_{f_+}$ on registers $(\sfR_2,\sfR_4,\sfR_5)$. 
The resulting state is
\[
\begin{split}
&
 \alpha_{00}\ket{\bmzero}_{(\sfS_1,\sfR_1, \sfR_2)} \ket{\bmzero}_{(\sfS_2,\sfR_3, \sfR_4)} \ket{0}_{\sfR_5}\\
  +
  &
  \alpha_{01}\ket{\bmzero}_{(\sfS_1,\sfR_1, \sfR_2)} \ket{\bmone}_{(\sfS_2,\sfR_3, \sfR_4)}\ket{1}_{\sfR_5}\\
  +
 &
  \alpha_{10}\ket{\bmone}_{(\sfS_1,\sfR_1, \sfR_2)} \ket{\bmzero}_{(\sfS_2,\sfR_3, \sfR_4)}\ket{1}_{\sfR_5}\\
  +
  &\alpha_{11}\ket{\bmone}_{(\sfS_1,\sfR_1, \sfR_2)} \ket{\bmone}_{(\sfS_2,\sfR_3, \sfR_4)}\ket{0}_{\sfR_5}
.
 \end{split}
\]
The register~$\sfR_5$ is then sent to $n_2$.

The node~$n_2$ now prepares 
two registers~$\sfR_6$~and~$\sfR_7$, and applies the operator  $\unitaryU_{f_I,f_I}$
on the registers $(\sfR_5,\sf R_6,\sfR_7)$.
 The resulting state is
\[
\begin{split}
&
 \alpha_{00}\ket{\bmzero}_{(\sfS_1,\sfR_1, \sfR_2)} \ket{\bmzero}_{(\sfS_2,\sfR_3, \sfR_4)}  \ket{\bmzero}_{(\sfR_5,\sfR_6, \sfR_7)}\\
 +
&
 \alpha_{01}\ket{\bmzero}_{(\sfS_1,\sfR_1, \sfR_2)} \ket{\bmone}_{(\sfS_2,\sfR_3, \sfR_4)}\ket{\bmone}_{(\sfR_5,\sfR_6, \sfR_7)}\\\
 +
&
 \alpha_{10}\ket{\bmone}_{(\sfS_1,\sfR_1, \sfR_2)} \ket{\bmzero}_{(\sfS_2,\sfR_3, \sfR_4)}\ket{\bmone}_{(\sfR_5,\sfR_6, \sfR_7)}\\\
 + &
 \alpha_{11}\ket{\bmone}_{(\sfS_1,\sfR_1, \sfR_2)} \ket{\bmone}_{(\sfS_2,\sfR_3, \sfR_4)}\ket{\bmzero}_{(\sfR_5,\sfR_6, \sfR_7)}
 ,
 \end{split}
\]
and the registers~$\sfR_6$~and~$\sfR_7$ are sent to $n_3$ and $n_4$, respectively.

In the last step of the simulation, node $n_3$~(resp.~$n_4$) prepares one register~$\sfT_2$ (resp.~$\sfT_1$),
and applies the $\unitaryU_{f_+}$ to registers $(\sfR_1,\sfR_6,\sfT_2)$ (resp.~$(\sfR_3,\sfR_7,\sfT_1)$). 
The resulting state is
\[
\begin{split}
&
 \alpha_{00}\ket{\bmzero}_{(\sfS_1,\sfR_1, \sfR_2)} \ket{\bmzero}_{(\sfS_2,\sfR_3, \sfR_4)} \ket{\bmzero}_{(\sfR_5,\sfR_6, \sfR_7)}\ket{0}_{\sfT_1} \ket{0}_{\sfT_2}\\
 +
&
 \alpha_{01}\ket{\bmzero}_{(\sfS_1,\sfR_1, \sfR_2)} \ket{\bmone}_{(\sfS_2,\sfR_3, \sfR_4)}\ket{\bmone}_{(\sfR_5,\sfR_6, \sfR_7)}\ket{0}_{\sfT_1} \ket{1}_{\sfT_2}\\\
 +
&
 \alpha_{10}\ket{\bmone}_{(\sfS_1,\sfR_1, \sfR_2)} \ket{\bmzero}_{(\sfS_2,\sfR_3, \sfR_4)}\ket{\bmone}_{(\sfR_5,\sfR_6, \sfR_7)} \ket{1}_{\sfT_1} \ket{0}_{\sfT_2}\\\
 + &
 \alpha_{11}\ket{\bmone}_{(\sfS_1,\sfR_1, \sfR_2)} \ket{\bmone}_{(\sfS_2,\sfR_3, \sfR_4)}\ket{\bmzero}_{(\sfR_5,\sfR_6, \sfR_7)}\ket{1}_{\sfT_1} \ket{1}_{\sfT_2},
 \end{split}
\]
and registers $\sfT_1$ and $\sfT_2$ are sent to nodes $t_1$ and $t_2$, respectively.

At this point of our protocol, node $s_1$ owns register $\sfS_1$, node $s_2$ owns register $\sfS_2$, node $n_1$ owns registers $\sfR_2$ and $\sfR_4$, node $n_2$ owns 
register $\sfR_5$, node $n_3$ owns register $\sfR_1$ and $\sfR_6$, node $n_4$ owns registers $\sfR_3$ and $\sfR_7$, node $t_1$ owns register $\sfT_1$, and node $t_2$
owns register $\sfT_2$.

\vspace{2mm}
\noindent{\bf Part 2.}
Let us take a reverse topological order of the internal nodes of the graph, for example $(n_3,n_4,n_2,n_1)$.

Node $n_3$ first measures its registers $\sfR_1$ and $\sfR_6$ in the Hadamard basis 
(i.e., the Fourier basis for $|\Sigma|=2$), 
giving outcomes $a_1$ and $a_2$ respectively. 
The resulting state is
\[
\begin{split}
&
  \alpha_{00}\ket{0}_{\sfS_1} \ket{a_1}_{\sfR_1} \ket{0}_{\sfR_2}  \ket{\bmzero}_{(\sfS_2,\sfR_3, \sfR_4)}\ket{0}_{\sfR_5} \ket{a_2}_{\sfR_6}\ket{0}_{\sfR_7}\ket{0}_{\sfT_1} \ket{0}_{\sfT_2}\\
  +
  (-1)^{a_2}
 &
  \alpha_{01}\ket{0}_{\sfS_1}\ket{a_1}_{\sfR_1} \ket{0}_{\sfR_2}  \ket{\bmone}_{(\sfS_2,\sfR_3, \sfR_4)}\ket{1}_{\sfR_5} \ket{a_2}_{\sfR_6}\ket{1}_{\sfR_7}\ket{0}_{\sfT_1} \ket{1}_{\sfT_2}\\
  +
   (-1)^{a_1+a_2}
 &
  \alpha_{10}\ket{1}_{\sfS_1}\ket{a_1}_{\sfR_1} \ket{1}_{\sfR_2}  \ket{\bmzero}_{(\sfS_2,\sfR_3, \sfR_4)}\ket{1}_{\sfR_5} \ket{a_2}_{\sfR_6}\ket{1}_{\sfR_7}\ket{1}_{\sfT_1} \ket{0}_{\sfT_2}\\
  + 
  (-1)^{a_1}
  &
  \alpha_{11}\ket{1}_{\sfS_1}\ket{a_1}_{\sfR_1} \ket{1}_{\sfR_2} \ket{\bmone}_{(\sfS_2,\sfR_3, \sfR_4)} \ket{0}_{\sfR_5} \ket{a_2}_{\sfR_6}\ket{0}_{\sfR_7}\ket{1}_{\sfT_1} \ket{1}_{\sfT_2},\\
  \end{split}
\]
and $\sfR_1$ and $\sfR_6$ can be disregarded. Node $n_3$ then sends the bit $a_1$ to $s_1$, 
and the bit $a_2$ to $n_2$, using classical communication along the edges $(s_1,n_3)$ and $(n_2,n_3)$ but in the reverse direction. 
Node $s_1$ then applies on its register $\sfS_1$ the quantum operator mapping, for each $x\in\bbF_2$, the state $\ket{x}$ to $(-1)^{-a_1x}\ket{x}$.
Node $n_2$ applies on its register $\sfR_5$ the quantum operator mapping, for each $x\in\bbF_2$, the state $\ket{x}$ to $(-1)^{-a_2x}\ket{x}$. The quantum state becomes 
\[
\begin{split}
&
  \alpha_{00}\ket{0}_{\sfS_1} \ket{0}_{\sfR_2}  \ket{\bmzero}_{(\sfS_2,\sfR_3, \sfR_4)}\ket{0}_{\sfR_5}\ket{0}_{\sfR_7}\ket{0}_{\sfT_1} \ket{0}_{\sfT_2}\\
  +
 &
  \alpha_{01}\ket{0}_{\sfS_1}\ket{0}_{\sfR_2}  \ket{\bmone}_{(\sfS_2,\sfR_3, \sfR_4)}\ket{1}_{\sfR_5} \ket{1}_{\sfR_7}\ket{0}_{\sfT_1} \ket{1}_{\sfT_2}\\
  +
 &
  \alpha_{10}\ket{1}_{\sfS_1}\ket{1}_{\sfR_2}  \ket{\bmzero}_{(\sfS_2,\sfR_3, \sfR_4)}\ket{1}_{\sfR_5} \ket{1}_{\sfR_7}\ket{1}_{\sfT_1} \ket{0}_{\sfT_2}\\
  + 
  &
  \alpha_{11}\ket{1}_{\sfS_1} \ket{1}_{\sfR_2} \ket{\bmone}_{(\sfS_2,\sfR_3, \sfR_4)} \ket{0}_{\sfR_5} \ket{0}_{\sfR_7}\ket{1}_{\sfT_1} \ket{1}_{\sfT_2},\\
  \end{split}
\]
where $\sfR_1$ and $\sfR_6$ have been disregarded.

Similarly, node $n_4$ measures its registers $\sfR_3$ and $\sfR_7$ in the Hadamard basis, giving outcomes $b_1$ and $b_2$ respectively. 
The resulting state is
\[
\begin{split}
&
  \alpha_{00}\ket{0}_{\sfS_1} \ket{0}_{\sfR_2}  \ket{0}_{\sfS_2}\ket{b_1}_{\sfR_3}\ket{0}_{\sfR_4}\ket{0}_{\sfR_5}\ket{b_2}_{\sfR_7}\ket{0}_{\sfT_1} \ket{0}_{\sfT_2}\\
  +
  (-1)^{b_1+b_2}
 &
\alpha_{01}\ket{0}_{\sfS_1} \ket{0}_{\sfR_2}  \ket{1}_{\sfS_2}\ket{b_1}_{\sfR_3}\ket{1}_{\sfR_4}\ket{1}_{\sfR_5}\ket{b_2}_{\sfR_7}\ket{0}_{\sfT_1} \ket{1}_{\sfT_2}\\
  +
   (-1)^{b_2}
 &
\alpha_{10}\ket{1}_{\sfS_1} \ket{1}_{\sfR_2}  \ket{0}_{\sfS_2}\ket{b_1}_{\sfR_3}\ket{0}_{\sfR_4}\ket{1}_{\sfR_5}\ket{b_2}_{\sfR_7}\ket{1}_{\sfT_1} \ket{0}_{\sfT_2}\\
  + 
   (-1)^{b_1}
  &
\alpha_{11}\ket{1}_{\sfS_1} \ket{1}_{\sfR_2}  \ket{1}_{\sfS_2}\ket{b_1}_{\sfR_3}\ket{1}_{\sfR_4}\ket{0}_{\sfR_5}\ket{b_2}_{\sfR_7}\ket{1}_{\sfT_1} \ket{1}_{\sfT_2}.
  \end{split}
\]
Node $n_4$ then sends $b_1$ to node $s_2$, and $b_2$ to node $n_2$  using classical communication.
Node $s_1$ then applies on its register $\sfS_2$ the quantum operator mapping, for each $x\in\bbF_2$, the state $\ket{x}$ to $(-1)^{-b_1x}\ket{x}$.
Node $n_2$ applies on its register $\sfR_5$ the quantum operator mapping, for each $x\in\bbF_2$, the state $\ket{x}$ to $(-1)^{-b_2x}\ket{x}$. The 
quantum state becomes 
\[
\begin{split}
&
  \alpha_{00}\ket{0}_{\sfS_1} \ket{0}_{\sfR_2}  \ket{0}_{\sfS_2}\ket{0}_{\sfR_4}\ket{0}_{\sfR_5}\ket{0}_{\sfT_1} \ket{0}_{\sfT_2}\\
  +
 &
\alpha_{01}\ket{0}_{\sfS_1} \ket{0}_{\sfR_2}  \ket{1}_{\sfS_2}\ket{1}_{\sfR_4}\ket{1}_{\sfR_5}\ket{0}_{\sfT_1} \ket{1}_{\sfT_2}\\
  +
 &
\alpha_{10}\ket{1}_{\sfS_1} \ket{1}_{\sfR_2}  \ket{0}_{\sfS_2}\ket{0}_{\sfR_4}\ket{1}_{\sfR_5}\ket{1}_{\sfT_1} \ket{0}_{\sfT_2}\\
  + 
  &
\alpha_{11}\ket{1}_{\sfS_1} \ket{1}_{\sfR_2}  \ket{1}_{\sfS_2}\ket{1}_{\sfR_4}\ket{0}_{\sfR_5}\ket{1}_{\sfT_1} \ket{1}_{\sfT_2},
  \end{split}
\]
where $\sfR_3$ and $\sfR_7$ have been disregarded.

Node $n_2$ then measures its registers $\sfR_5$ in the Hadamard basis, giving outcomes $c$. The state becomes
\[
\begin{split}
&
  \alpha_{00}\ket{0}_{\sfS_1} \ket{0}_{\sfR_2}  \ket{0}_{\sfS_2}\ket{0}_{\sfR_4}\ket{c}_{\sfR_5}\ket{0}_{\sfT_1} \ket{0}_{\sfT_2}\\
  +
  (-1)^{c}
 &
\alpha_{01}\ket{0}_{\sfS_1} \ket{0}_{\sfR_2}  \ket{1}_{\sfS_2}\ket{1}_{\sfR_4}\ket{c}_{\sfR_5}\ket{0}_{\sfT_1} \ket{1}_{\sfT_2}\\
  +
  (-1)^{c}
 &
\alpha_{10}\ket{1}_{\sfS_1} \ket{1}_{\sfR_2}  \ket{0}_{\sfS_2}\ket{0}_{\sfR_4}\ket{c}_{\sfR_5}\ket{1}_{\sfT_1} \ket{0}_{\sfT_2}\\
  + 
  &
\alpha_{11}\ket{1}_{\sfS_1} \ket{1}_{\sfR_2}  \ket{1}_{\sfS_2}\ket{1}_{\sfR_4}\ket{c}_{\sfR_5}\ket{1}_{\sfT_1} \ket{1}_{\sfT_2}.
  \end{split}
\]
The value $c$ is then sent to node $n_1$.
The register $\sfR_5$ can be disregarded, and the phase introduced is corrected in the following way: node $n_1$ applies
on its registers $(\sfR_2,\sfR_4)$ the unitary operator mapping, for any $x,y\in\bbF_2$, the state $\ket{x,y}$ to 
the state $(-1)^{-cf_+(x,y)}\ket{x,y}=(-1)^{-c(x+y)}\ket{x,y}$.
The resulting state is 
\[
\begin{split}
&
  \alpha_{00}\ket{0}_{\sfS_1} \ket{0}_{\sfR_2}  \ket{0}_{\sfS_2}\ket{0}_{\sfR_4}\ket{0}_{\sfT_1} \ket{0}_{\sfT_2}\\
  +
 &
\alpha_{01}\ket{0}_{\sfS_1} \ket{0}_{\sfR_2}  \ket{1}_{\sfS_2}\ket{1}_{\sfR_4}\ket{0}_{\sfT_1} \ket{1}_{\sfT_2}\\
  +
 &
\alpha_{10}\ket{1}_{\sfS_1} \ket{1}_{\sfR_2}  \ket{0}_{\sfS_2}\ket{0}_{\sfR_4}\ket{1}_{\sfT_1} \ket{0}_{\sfT_2}\\
  + 
  &
\alpha_{11}\ket{1}_{\sfS_1} \ket{1}_{\sfR_2}  \ket{1}_{\sfS_2}\ket{1}_{\sfR_4}\ket{1}_{\sfT_1} \ket{1}_{\sfT_2}.
  \end{split}
\]
Notice that the same procedure to remove register $\sfR_5$ would have worked even if $f_+$ is not a linear function.
This is the crucial observation that enables us to simulate nonlinear classical protocols as well.

Finally, node $n_1$ measures its registers $\sfR_2$ and $\sfR_4$ in the  Hadamard basis, giving outcomes $d_1$ and
$d_2$ respectively. It sends $d_1$ to node $s_1$, and $d_2$ to node $s_2$, respectively. 
The registers $\sfR_2$ and $\sfR_4$ can then be disregarded, 
and the phase introduced is corrected in the following way: node $s_1$ (resp.~$s_2$) applies
on its register $\sfS_1$ (resp.~$\sfS_2$) the unitary operator mapping, for any $x\in\bbF_2$, 
the state $\ket{x}$ to the state $(-1)^{-d_1x}\ket{x}$ (resp.~to $(-1)^{-d_2x}\ket{x}$). The quantum state becomes
\[
\begin{split}
&
  \alpha_{00}\ket{0}_{\sfS_1}   \ket{0}_{\sfS_2}\ket{0}_{\sfT_1} \ket{0}_{\sfT_2}\\
  +
 &
\alpha_{01}\ket{0}_{\sfS_1}  \ket{1}_{\sfS_2}\ket{0}_{\sfT_1} \ket{1}_{\sfT_2}\\
  +
 &
\alpha_{10}\ket{1}_{\sfS_1}  \ket{0}_{\sfS_2}\ket{1}_{\sfT_1} \ket{0}_{\sfT_2}\\
  + 
  &
\alpha_{11}\ket{1}_{\sfS_1}  \ket{1}_{\sfS_2}\ket{1}_{\sfT_1} \ket{1}_{\sfT_2}.
  \end{split}
\]

\vspace{3mm}
\noindent{\bf Part 3.}
Node $s_1$ (resp.~$s_2$) measures its quantum register $\sfS_1$ (resp.~$\sfS_2$) in the 
Hadamard basis, giving outcomes $e_1$ (resp.~$e_2$). The state becomes
\[
\begin{split}
&
  \alpha_{00}\ket{e_1}_{\sfS_1}   \ket{e_2}_{\sfS_2}\ket{0}_{\sfT_1} \ket{0}_{\sfT_2}\\
  +
  (-1)^{e_2}
 &
\alpha_{01}\ket{e_1}_{\sfS_1}  \ket{e_2}_{\sfS_2}\ket{0}_{\sfT_1} \ket{1}_{\sfT_2}\\
  +
  (-1)^{e_1}
 &
\alpha_{10}\ket{e_1}_{\sfS_1}  \ket{e_2}_{\sfS_2}\ket{1}_{\sfT_1} \ket{0}_{\sfT_2}\\
  + 
  (-1)^{e_1+e_2}
  &
\alpha_{11}\ket{e_1}_{\sfS_1}  \ket{e_2}_{\sfS_2}\ket{1}_{\sfT_1} \ket{1}_{\sfT_2}.
  \end{split}
\]
The registers $\sfS_1$ and $\sfS_2$ can then be disregarded. Then $e_1$ is sent 
from the sources $s_1$ to the target $t_1$ using classical communication and, 
similarly, $e_2$ is sent from $s_2$ to $t_2$. This is done using one bit of communication per edge by 
using the original coding protocol for the butterfly graph:  the bit $e_1$ is sent through edges
$(s_1,n_1)$ and $(s_1,n_3)$, the bit $e_2$ is sent through edges $(s_2,n_1)$ and $(s_2,n_4)$,
and the bit $e_1+e_2$ is sent through edges $(n_1,n_2)$, $(n_2,n_3)$ and $(n_2,n_4)$. The bit
$e_1$ can then be recovered at node $t_1$ and the bit $e_2$ can be recovered at node $t_2$.

Node $t_1$ (resp.~$t_2$) finally applies on its register $\sfT_1$ (resp.~$\sfT_2$) 
the unitary operator mapping, for any $x\in\bbF_2$, the state $\ket{x}$ to 
the state $(-1)^{-e_1x}\ket{x}$ (resp.~to $(-1)^{-e_2x}\ket{x}$). Now the quantum state becomes
the desired state
$$
\alpha_{00}\ket{0}_{\sfT_1}\ket{0}_{\sfT_2}+\alpha_{01}\ket{0}_{\sfT_1}\ket{1}_{\sfT_2}+
\alpha_{10}\ket{1}_{\sfT_1}\ket{0}_{\sfT_2}+\alpha_{11}\ket{1}_{\sfT_1}\ket{1}_{\sfT_2}
=\ket{\psi_S}_{(\sfT_1,\sfT_2)}.
$$
This concludes our protocol.


\end{document}